\newtheorem{theorem}{Theorem}
\newtheorem{lemma}[theorem]{Lemma}
\newtheorem{cor}[theorem]{Corollary}
\newcommand{\steg}{\operatorname{steg}}
\begin{document}

\title{Bounds on Herman's algorithm}
\author{John Haslegrave}

\maketitle

\begin{abstract}
Herman's self-stabilisation algorithm allows a ring of $N$ processors having any odd number of tokens to reach a stable state where exactly one token remains. McIver and Morgan conjecture that the expected time taken for stabilisation is maximised when there are three equally-spaced tokens. We prove exact results on a related cost function, and obtain a bound on expected time which is very close to the conjectured bound.
\end{abstract}

Keywords: Randomized algorithms; Probabilistic self-stabilization; Herman's algorithm

\section{Introduction}
Self-stabilisation algorithms were first discussed by Dijkstra \cite{Dij74}, and have since been widely studied (see eg \cite{Dol00}, \cite{Sch93}). Herman's algorithm provides a randomised self-stabilisation mechanism for $N$ processors connected unidirectionally in a ring, with synchronous updates. Each processor either has a token or does not. In an initial state an unknown, but odd, number of processors hold tokens, and the system is stable if there is only one token. Herman proposed a scheme where, simultaneously at each time step, each processor which has a token decides independently at random between keeping the token or passing it to the next processor in the ring, choosing each with equal probability. All updates occur simultaneously; if two tokens collide (because a processor keeping its token receives another from the previous processor) then they annihilate each other. Clearly the number of tokens will remain odd if this procedure is followed, and will never increase. It is easy to see that from any non-stable state with $N$ processors there is a probability of at least $2^{-2N/3}$ that an annihilation will occur within $\frac{1}{3}N$ steps. Consequently the algorithm stabilises almost surely; in fact the total time taken has finite expectation. The algorithm was introduced in \cite{Her90}, where Herman showed that the expected time to stabilisation is $O(N^2\log N)$. This bound was improved to $O(N^2)$ independently, with different constants, by McIver and Morgan \cite{MM05}; Fribourg, Messika, and Picaronny \cite{FMP05}; and Nakata \cite{Nak05}. McIver and Morgan also conjectured that the expected time is maximised by a starting state of three equally-spaced tokens; they show that this state has expected time $\frac{4}{27}N^2$ and that any other three-token configuration has a lower expected time. Kiefer, Murawski, Ouaknine, Wachter and Worrell \cite{KMOWW} extend this result by showing that the probability of stabilisation by time $t$ of any three-token state is bounded by that of the equally-spaced three-token state for each $t$. The conjecture of McIver and Morgan is supported by simulations using the PRISM model-checking software \cite{PRISM}. When there may be any number of initial tokens, the best previous upper bound is about $0.64N^2$, by Kiefer, Murawski, Ouaknine, Worrell and Zhang \cite{KMOWZ}. Here we give an upper bound of about $0.156N^2$, which is comparatively close to the conjectured value of just over $0.148N^2$. Our approach will differ from that of other papers in that we will in fact prove exact results for the expectation of a different cost function. A bound of $\frac{1}{6}N^2$ will immediately follow from the fact that our cost function is at least the time taken, and this can be slightly improved by considering the relationship between time and cost more carefully.

\section{The $\steg$ function and inequality}
For any odd $m\geqslant 3$ and variables $a_1,\ldots a_m$, a \textit{triple with even gaps} is a term which is the product of $a_i$, $a_j$ and $a_k$ for some $i<j<k$ and $k-j,k-i$ both odd (so that the number of unused variables between consecutive used variables is even). If we consider the variables as indexed by elements of $\mathbb{Z}_m$ then $a_{i+1}a_{j+1}a_{k+1}$ is a triple with even gaps if and only if $a_ia_ja_k$ is.

\begin{lemma}\label{number}There are $\frac{1}{24}m(m+1)(m-1)$ triples with even gaps on $a_1,\ldots, a_m$.\end{lemma}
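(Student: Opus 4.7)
My plan is to count the valid triples directly via a parity case analysis. First I would unpack the definition: for $m$ odd and $i<j<k$ in $\{1,\ldots,m\}$, the condition (as clarified by the parenthetical remark that each cyclic gap between consecutive used variables is even) is equivalent to requiring $j-i$ and $k-j$ both odd, so $i,j,k$ alternate in parity, with $i,k$ of one parity and $j$ of the opposite.

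Writing $m=2\ell+1$, the index set contains $\ell+1$ odd and $\ell$ even values. I would split according to the parity of $j$. In the case where $j$ is even, I parameterise $i=2a-1$ and $k=2b-1$ with $1\le a<b\le \ell+1$; the even indices strictly between $i$ and $k$ are $2a,2a+2,\ldots,2b-2$, giving exactly $b-a$ admissible choices of $j$. Summing, this case contributes
\[
\sum_{1\le a<b\le \ell+1}(b-a)=\sum_{d=1}^{\ell}d(\ell+1-d)=\binom{\ell+2}{3},
\]
using the standard identity $\sum_{d=1}^{n-1}d(n-d)=\binom{n+1}{3}$. Symmetrically, when $j$ is odd and $i,k$ are even the same argument on the $\ell$ even indices yields $\binom{\ell+1}{3}$.

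Combining the two cases,
\[
\binom{\ell+2}{3}+\binom{\ell+1}{3}=\frac{\ell(\ell+1)(2\ell+1)}{6},
\]
and substituting $\ell=(m-1)/2$ gives $\frac{m(m-1)(m+1)}{24}$, as required.

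The proof is essentially bookkeeping, so there is no serious obstacle; the only thing to watch is keeping the parity case analysis and the off-by-one counts of odd versus even indices straight when translating back from $\ell$ to $m$. A more symmetric cyclic approach — counting ordered triples of non-negative even gaps $g_1+g_2+g_3=m-3$ and then relating cyclic configurations to linear ones — would also work, but the extra step of passing from cyclic rotations to ordered linear triples introduces needless bookkeeping, so the direct split on the parity of $j$ seems cleaner.
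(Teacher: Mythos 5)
Your count is correct. Your reading of the definition is the right one (the displayed condition ``$k-j,k-i$ both odd'' is best understood via the parenthetical remark: the three indices must alternate in parity, i.e.\ $j-i$ and $k-j$ are both odd), and both of your cases check out: with $m=2\ell+1$ the middle-index-even case gives $\sum_{d=1}^{\ell}d(\ell+1-d)=\binom{\ell+2}{3}$, the middle-index-odd case gives $\binom{\ell+1}{3}$, and the sum $\frac{\ell(\ell+1)(2\ell+1)}{6}$ becomes $\frac{1}{24}m(m+1)(m-1)$ after substituting $\ell=(m-1)/2$. Your route differs from the paper's: the paper fixes the variable $a_1$, exhibits a bijection between the triples containing it and unordered pairs from $\{1,\ldots,\frac{m+1}{2}\}$ (giving $\binom{(m+1)/2}{2}$ of them), and then invokes the cyclic-shift invariance of the family of triples to multiply by $\frac{m}{3}$. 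That argument is shorter and reuses the cyclic symmetry that the paper needs anyway, at the cost of requiring the observation that every variable lies in the same number of triples; your direct enumeration avoids any appeal to symmetry and is entirely self-contained, at the cost of a two-case parity split and the identity $\sum_{d=1}^{n-1}d(n-d)=\binom{n+1}{3}$. Either proof is perfectly acceptable here.
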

\begin{proof}
The triples with even gaps containing $a_1$ are the triples of the form $a_1a_ja_k$ with $j<k$, $j$ even, and $k$ odd. There is a one-to-one correspondence between these terms and unordered pairs from the set $\{1,\ldots,\frac{m+1}{2}\}$, in which $a_1a_ja_k$ corresponds to the pair $\frac{j}{2},\frac{k+1}{2}$. Consequently there are $\binom{(m+1)/2}{2}$ such triples, and, since each variable is in the same number of triples with even gaps, there are $\frac{m}{3}\binom{(m+1)/2}{2}=\frac{1}{24}m(m+1)(m-1)$ triples with even gaps.
\end{proof}

Write $\steg(a_1,\ldots a_m)$ for the sum of all triples with even gaps (note that when $m=1$ there are no triples, and so $\steg(a_1)\equiv 0$). As we noted  above, cyclic shifts of the variables do not change which terms are included and so preserve $\steg$. The motivation for introducing this function is the following reduction when one of the variables is set to 0; we choose the penultimate variable for notational convenience.

\begin{lemma}\label{reduction}
For $m\geqslant 3$, if $a_{m-1}=0$ then
\[
\steg(a_1,\ldots a_m)=\steg(a_1,\ldots a_{m-3}, a_{m-2}+a_m)\,.
\]
\end{lemma}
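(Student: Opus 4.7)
My plan is to expand both sides of the identity using the definition of $\steg$ as a sum over triples with even gaps, then match monomials after partitioning by the largest index in each triple. On the LHS, $\steg(a_1,\ldots,a_m)=\sum_{(i,j,k)}a_ia_ja_k$ summed over triples with even gaps; since $a_{m-1}=0$, every triple containing the index $m-1$ contributes zero. I would then partition the surviving triples by the value of the largest index $k$: either $k\leqslant m-3$, or $k=m-2$, or $k=m$. The case $k=m-1$ gives only vanishing terms, and a single triple cannot contain both $a_{m-2}$ and $a_m$, since then $k-j=2$ would violate the parity constraint on the gap.

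Similarly, I would partition the RHS triples (on the $m-2$ variables $b_1=a_1,\ldots,b_{m-3}=a_{m-3},b_{m-2}=a_{m-2}+a_m$) according to whether $k\leqslant m-3$ or $k=m-2$. The $k\leqslant m-3$ contributions match directly on both sides: the index sets and monomials coincide. For the remaining terms, the key coincidence is that the LHS constraints for $k=m-2$ and $k=m$ impose identical conditions on $(i,j)$, because $m$ and $m-2$ share the same (odd) parity. After checking that the range of valid $(i,j)$ reduces to the same set in both cases (excluding $j=m-1$ as vanishing and $j=m-2$ as parity-forbidden in the $k=m$ scenario), the LHS $k\in\{m-2,m\}$ contribution becomes $\sum_{(i,j)}a_ia_j(a_{m-2}+a_m)$, summed over the common range. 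This matches the RHS $k=m-2$ contribution $\sum_{(i,j)}a_ia_jb_{m-2}$, whose range is determined by the same parity analysis (again using that $m-2$ is odd).

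The main obstacle is not conceptual but rather the careful parity bookkeeping. The crux is verifying that the three index-set constraints (LHS $k=m-2$, LHS $k=m$, and RHS $k=m-2$) coincide exactly, a coincidence that hinges on $m$ and $m-2$ having the same parity; the oddness of $m$ — inherited from the requirement on the number of arguments of $\steg$ — is therefore used in an essential way.
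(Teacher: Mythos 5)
Your argument is correct and follows essentially the same route as the paper's proof: discard the terms involving $a_{m-1}$, observe that $a_ia_ja_{m-2}$ is a triple with even gaps exactly when $a_ia_ja_m$ is (since $m-2$ and $m$ have the same parity), and pair these terms to yield $a_ia_j(a_{m-2}+a_m)$ on the right-hand side. One small slip: the triple $a_{m-2}a_{m-1}a_m$ does contain both $a_{m-2}$ and $a_m$ without violating the parity constraint (it vanishes only because $a_{m-1}=0$), but your later exclusion of $j=m-1$ as a vanishing case covers this anyway.
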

\begin{proof}
In $\steg(a_1,\ldots a_m)$, all terms with $a_{m-1}$ vanish. No terms include both $a_{m-2}$ and $a_m$ apart from $a_{m-2}a_{m-1}a_m$, which vanishes. For $i<j<m-2$, $a_ia_ja_{m-2}$ is a triple with even gaps if and only if $a_ia_ja_m$ is. 

For $i<j<k<m-2$, $a_ia_ja_k$ is a term in the LHS if and only if it is a term in the RHS. The remaining terms on the LHS occur in pairs, with $a_ia_ja_{m-2}$ being such a term if and only if $a_ia_ja_m$ is, and that pair of terms appears in the LHS if and only if $a_ia_j(a_{m-2}+a_m)$ is a term in the RHS, so the two sides are equal.
\end{proof}
We next give an upper bound on $\steg$.

\begin{theorem}\label{inequality}For any odd $m\geqslant 3$, if $x_1,x_2,\ldots,x_m$ are non-negative reals such that $x_1+x_2+\cdots+x_m=1$ then 
\begin{equation}\label{ineq}
\steg(x_1,\ldots x_m)\leqslant\frac{1}{24}\left(1-\frac{1}{m^2}\right)\,.
\end{equation}
\end{theorem}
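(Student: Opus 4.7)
The plan is to proceed by strong induction on odd $m \geq 3$. The base case $m = 3$ reduces to $\steg(x_1,x_2,x_3) = x_1 x_2 x_3 \leq 1/27 = \tfrac{1}{24}(1 - 1/9)$, immediate from AM--GM. For the inductive step, continuity of $\steg$ and compactness of the simplex produce a maximiser $x^*$, and I split into a boundary case (some $x_i^* = 0$) and an interior case.

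In the boundary case, cyclic invariance of $\steg$ lets me place the zero at position $m-1$. Lemma~\ref{reduction} then identifies $\steg(x^*)$ with a $\steg$-value on the $(m-2)$-variable probability vector $(x_1^*, \ldots, x_{m-3}^*, x_{m-2}^* + x_m^*)$, and the inductive hypothesis bounds this by $\tfrac{1}{24}(1 - 1/(m-2)^2) < \tfrac{1}{24}(1 - 1/m^2)$. Thus the maximum is in fact attained in the interior, where the Lagrange conditions read $\partial_i \steg(x^*) = \lambda$ for every $i$.

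For the interior case my plan is to show that the uniform vector is the unique critical point, at which (by Lemma~\ref{number}) $\steg$ equals $\tfrac{m(m+1)(m-1)}{24} \cdot (1/m)^3 = \tfrac{1}{24}(1 - 1/m^2)$. The crux is the identity
\[
\partial_i \steg(x) - \partial_{i+2}\steg(x) = x_{i+1}\, L_i(x), \qquad L_i(x) = \sum_{k=0}^{m-2}(-1)^k x_{(i+2+k)\bmod m}.
\]
Two combinatorial facts underlie it: no even-gap triple contains both $i$ and $i+2$ (their short arc has an odd number of unused positions), and the map $\{i,j,k\} \leftrightarrow \{i+2,j,k\}$ on even-gap triples avoiding $i+1$ is a sign-reversing bijection between contributions to $\partial_i$ and to $\partial_{i+2}$, because shifting a vertex by $2$ alters the two incident arc-lengths by $\pm 2$ and hence preserves their parities. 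The surviving triples are exactly those through $i+1$, yielding $L_i$. Since $x_{i+1}^* > 0$ in the interior, every $L_i(x^*) = 0$; the resulting system is circulant, and a short DFT computation on its generator (with $c_0 = -1$, $c_1 = 0$, $c_d = (-1)^d$ for $d \geq 2$) shows that its kernel is one-dimensional, spanned by $(1, \ldots, 1)$. Combined with $\sum x_i^* = 1$ this pins $x^*$ uniquely to the uniform distribution.

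The main obstacle is verifying the factorisation identity above: the vanishing of the no-$i+1$ contribution is morally the content of the pairing $\{i,j,k\} \leftrightarrow \{i+2,j,k\}$, but establishing it requires careful parity bookkeeping on cyclic arc lengths. Once the identity is in hand, the DFT calculation that determines the interior critical point is routine.
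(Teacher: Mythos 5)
Your proposal is correct and follows essentially the same route as the paper: the same induction with the boundary case handled by Lemma \ref{reduction}, and your factorisation $\partial_i\steg-\partial_{i+2}\steg=x_{i+1}L_i$ is precisely the paper's computation of the first-order effect of perturbing $x_1\mapsto x_1+\delta$, $x_3\mapsto x_3-\delta$. The only differences are cosmetic --- the paper solves the system $L_i=0$ by noting that $L_1=L_2=0$ forces $x_2=x_3$ and then invoking cyclic invariance, rather than a circulant/DFT kernel computation --- and your parenthetical claim that no even-gap triple contains both $i$ and $i+2$ has the single harmless exception $\{i,i+1,i+2\}$, which passes through $i+1$ and is correctly absorbed into $L_i$.
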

\begin{proof}
\[
\frac{1}{24}\left(1-\frac{1}{m^2}\right)=\frac{m(m+1)(m-1)}{24m^3}\,,
\]
and, by Lemma \ref{number}, $\steg(x_1,\ldots x_m)$ is the sum of $m(m+1)(m-1)/24$ terms. Each term takes the value $m^{-3}$ when $x_1=x_2\cdots =x_m=\frac{1}{m}$, so the RHS of \eqref{ineq} is just the value taken at this point. We will prove that the maximum cannot be attained anywhere else; since we are optimising a continuous function over a compact set, this will be sufficient. 

We use induction on $m$; the result holds for $m=3$ by the AM--GM inequality. If $m>3$ and $x_i=0$ for any $i$ then $\operatorname{steg}(x_1,\ldots x_m)=\operatorname{steg}(y_1,\ldots y_{m-2})$ for some non-negative $y_i$ summing to 1, by Lemma \ref{reduction}. By the induction hypothesis,
\begin{align*}
\operatorname{steg}(y_1,\ldots y_{l-2})&\leqslant\frac{1}{24}\left(1-\frac{1}{(m-2)^2}\right) \\
&<\frac{1}{24}\left(1-\frac{1}{m^2}\right) \\
&=\operatorname{steg}\left(\frac{1}{m},\ldots,\frac{1}{m}\right) \,,
\end{align*}
and so the maximum is not attained at any such point. 

For any point with $x_i>0$ for all $i$, consider the difference 
\[
\steg(x_1+\delta, x_2, x_3+\delta, x_4, x_5, \ldots x_m)-\steg(x_1,\ldots x_m)\,.
\]
Certainly any term which does not contain $x_1$ or $x_3$ will contribute nothing to the difference. For $j,k>2$, $x_1x_jx_k$ is a triple with even gaps if and only if $x_3x_jx_k$ is, and in that case $x_1x_jx_k+x_3x_jx_k=(x_1+\delta)x_jx_k+(x_3-\delta)x_jx_k$. So the only contributions to the difference are from terms of the form $x_1x_2x_3$, $x_1x_2x_k$ (with $k$ odd and greater than 3) and $x_2x_3x_k$ (with $k$ even). 
\begin{align*}
(x_1+\delta)x_2(x_3-\delta)-x_1x_2x_3&=\delta x_2x_3-\delta x_1x_2-\delta^2x_2\,; \\
(x_1+\delta)x_2x_k-x_1x_2x_k&=\delta x_2x_k\,; \\
\text{and} \quad x_2(x_3-\delta)x_k-x_2x_3x_k&=-\delta x_2x_k\,;
\end{align*}
so modifying $x_1$ and $x_3$ in this way increases the value of the function by $\delta x_2[(x_3+x_5+x_7+\cdots)-(x_1+x_4+x_6+\cdots)]-\delta^2x_2$. Since $x_1, x_2, x_3>0$, if $(x_3+x_5+x_7+\cdots)\neq(x_1+x_4+x_6+\cdots)$ then we can choose $\delta$ such that this is positive (and $x_1+\delta$ and $x_3-\delta$ are non-negative). Consequently if the maximum is attained at $x_1,\ldots x_l$ then we must have $(x_3+x_5+x_7+\cdots)=(x_1+x_4+x_6+\cdots)$. 
Applying the same argument to $x_2$ and $x_4$ shows that additionally we must have $(x_2+x_5+x_7+\cdots)=(x_1+x_4+x_6+\cdots)$, so $x_2=x_3$. Since the function is unchanged by a cyclic shift of the variables, we must have $x_i=x_{i+1}$ for every $i\in \mathbb{Z}_m$, so the maximum can only be attained when $x_i=\frac{1}{m}$ for every $i$.
\end{proof}
Note that many ostensibly similar functions do not satisfy an analogous inequality. If $f(x_1,\ldots x_m)$ is a sum of fewer than $m^3/27$ triples then it is not maximised when all variables are equal, yet $\steg$ has only slightly more than this. Also, if $g(x_1,\ldots x_m)$ is a sum of fewer than $3m^3/64$ triples which is maximised when all variables are equal then it cannot include three triples from any set of four variables.

Next we consider the average effect of a certain random transformation of the variables on $\steg(a_1,\ldots,a_m)$. Fix a subset $S\subset [m]$ of even size $2h$, and write $S=\{i_1,\ldots i_{2h}\}$ with $i_1<i_2<\cdots<i_{2h}$. The move $M_S^+$ adds 1 to $a_{i_k}$ for each odd $k$ and subtracts 1 from $a_{i_k}$ for each even $k$. The move $M_S^-$ does the opposite. Note that $M_{\varnothing}^+$ and $M_{\varnothing}^-$ both leave $a_1\ldots a_m$ unchanged; nevertheless, we regard them as different moves. There are then $2^m$ possible moves: 2 moves for each of the $2^{m-1}$ even subsets. 
\begin{theorem}\label{moves}
Writing $\boldsymbol{a}=(a_1,\ldots,a_m)$, let $\tilde{\boldsymbol{a}}$ be the random vector obtained by applying one of the $2^m$ possible moves, chosen uniformly at random, to $\boldsymbol{a}$. Then
\[
\mathbb{E}(\steg(\tilde{\boldsymbol{a}}))=\steg(\boldsymbol{a})-\frac{m-1}{8}\sum\nolimits_ka_k\,.
\] 
\end{theorem}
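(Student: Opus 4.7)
The plan is to write $\tilde a_\ell = a_\ell + \varepsilon_\ell$, where $\boldsymbol{\varepsilon}\in\{-1,0,+1\}^m$ is the random signed indicator of the chosen move, and to expand each factor $\tilde a_i\tilde a_j\tilde a_k$ appearing in $\steg(\tilde{\boldsymbol a})$ trilinearly. Because $M_S^+$ and $M_S^-$ produce opposite $\boldsymbol\varepsilon$, the involution $\boldsymbol\varepsilon\mapsto -\boldsymbol\varepsilon$ preserves the uniform distribution on moves, so every odd moment of $\boldsymbol\varepsilon$ vanishes. In particular $\mathbb{E}[\varepsilon_\ell]=0$ and $\mathbb{E}[\varepsilon_i\varepsilon_j\varepsilon_k]=0$, so expanding and taking expectations leaves only
\[
\mathbb{E}[\tilde a_i\tilde a_j\tilde a_k]-a_ia_ja_k \;=\; a_i\mathbb{E}[\varepsilon_j\varepsilon_k]+a_j\mathbb{E}[\varepsilon_i\varepsilon_k]+a_k\mathbb{E}[\varepsilon_i\varepsilon_j].
\]

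The crux is then to compute $\mathbb{E}[\varepsilon_p\varepsilon_q]$ for $p\neq q$, and I claim this equals $-\tfrac{1}{4}$ if $\{p,q\}$ is an edge of the cycle $\mathbb{Z}_m$ (i.e.\ $q\equiv p\pm1\pmod m$) and $0$ otherwise. Only subsets $S\ni p,q$ of even size contribute; for such $S$, if $p$ and $q$ occupy positions $r$ and $s$ in the sorted order of $S$, both $M_S^\pm$ give $\varepsilon_p\varepsilon_q=(-1)^{r+s}$. Parametrising $S$ by the numbers $\alpha,\beta,\gamma$ of its elements lying strictly below $p$, strictly between $p$ and $q$, and strictly above $q$, subject to $\alpha+\beta+\gamma$ even, one computes $r+s=2\alpha+\beta+3$ and hence $(-1)^{r+s}=-(-1)^\beta$. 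The resulting sum is a parity-restricted evaluation of $(1+x)^{p-1}(1-x)^{q-p-1}(1+x)^{m-q}$ at $x=\pm 1$, which survives only when either $\beta=0$ (linear adjacency, $q=p+1$) or $\alpha=\gamma=0$ (wrap-around adjacency, $p=1$ and $q=m$); in each surviving case the contribution to $\mathbb{E}[\varepsilon_p\varepsilon_q]$ is exactly $-\tfrac{1}{4}$.

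It then remains to assemble. Each even-gap triple $\{i,j,k\}$ contributes $-\tfrac{1}{4}\,a_r$ to $\mathbb{E}[\steg(\tilde{\boldsymbol a})]-\steg(\boldsymbol a)$ for each cyclically adjacent pair $\{p,q\}\subset\{i,j,k\}$, where $r$ is the remaining index. Switching the order of summation, for each of the $m$ cyclic edges $\{p,p+1\}$ of $\mathbb{Z}_m$ I count the indices $r$ making $\{p,p+1,r\}$ an even-gap triple: by cyclic symmetry take $p=1$; the three cyclic gap sizes $1$, $r-2$, $m-r$ must all be odd, forcing $r$ odd in $\{3,\ldots,m\}$, which gives exactly $(m-1)/2$ values. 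By the same symmetry used in Lemma~\ref{number}, each index $r$ is therefore the ``third vertex'' in exactly $(m-1)/2$ such (edge, triple) incidences, so
\[
\mathbb{E}[\steg(\tilde{\boldsymbol a})]-\steg(\boldsymbol a) \;=\; -\tfrac{1}{4}\cdot\tfrac{m-1}{2}\sum\nolimits_k a_k \;=\; -\tfrac{m-1}{8}\sum\nolimits_k a_k.
\]
The main obstacle is the covariance computation: getting the signs right and correctly identifying both cyclic-adjacency cases (linear and wrap-around) requires care, but everything else is bookkeeping with the symmetry already exploited in Lemma~\ref{number}.
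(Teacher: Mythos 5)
Your proof is correct, and while it rests on the same two ingredients as the paper's --- the trilinear expansion with odd moments of $\boldsymbol\varepsilon$ killed by the $M_S^+\leftrightarrow M_S^-$ symmetry, and a parity count over subsets $S$ --- you organise it around a different key lemma. The paper never isolates a pairwise quantity: it fixes a triple $a_ia_ja_k$, writes the per-subset contribution $\beta_{ijk}^S=-2(\eta_i\eta_ja_k+\eta_j\eta_ka_i+\eta_k\eta_ia_j)$, and then sums over $S$ by casework on $|S\cap\{i,j,k\}|$ and on which of the three arcs between $i,j,k$ are empty, finally splitting the even-gap triples into those with zero, one, or two consecutive pairs. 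You instead compute the covariance $\mathbb{E}[\varepsilon_p\varepsilon_q]$ once and for all, independently of any triple, showing it equals $-\tfrac14$ exactly when $p,q$ are cyclically adjacent and $0$ otherwise (your generating-function evaluation $(1+x)^{p-1}(1-x)^{q-p-1}(1+x)^{m-q}$ at $x=\pm1$ is exactly the paper's ``half the subsets of a non-empty set are odd'' argument in closed form), and then finish with a single double count of (cyclic edge, even-gap triple) incidences. This buys modularity: the case analysis collapses to ``adjacent or not,'' and the paper's three structural cases for triples (contributions $0$, $2^{m-2}a_k$, and $2^{m-3}(2a_{j-1}+2a_{j+1})$) all fall out of one uniform count. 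I verified your covariance value and the incidence count; both agree with the paper's totals. Two cosmetic points: your third cyclic gap for the triple $\{1,2,r\}$ should be $m-r+1$ (difference convention) or $m-r$ unused variables required to be \emph{even}, not odd --- you have mixed the two conventions, though the conclusion ($r$ odd, giving $(m-1)/2$ values) is right; and note the paper's displayed definition of ``even gaps'' contains the same kind of slip ($k-j$, $j-i$ both odd is what is meant), so your reading is the correct one.
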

\begin{proof}
Note that 
\[
2^m\big(\steg(\boldsymbol{a})-\mathbb{E}(\steg(\tilde{\boldsymbol{a}}))\big)
=\sum_S\big(2\steg(\boldsymbol{a})-\steg\left(M_S^+\boldsymbol{a}\right)-\steg\left(M_S^-\boldsymbol{a}\right)\big)\,,
\]
so it will be sufficient to show that the latter is equal to $2^{m-3}(m-1)\sum_ka_k$.

Fix a triple with even gaps $a_ia_ja_k$, ordering the variables so as to preserve the cyclic ordering within $Z_m$ (that is to say, regarded as members of the set $\{1,\ldots,m\}$, either $i<j<k$, $j<k<i$, or $k<i<j$). Consider the contribution of this triple to $2\steg(\boldsymbol{a})-\steg\left(M_S^+\boldsymbol{a}\right)-\steg\left(M_S^-\boldsymbol{a}\right)$; call this quantity $\beta_{ijk}^S$. The move $M_S^+$ adds $\eta_i$ to $a_i$, for some $\eta_i \in \{-1,0,+1\}$, and $M_S^-$ subtracts the same amount. Note that $\eta_i$ is non-zero if and only if $a_i\in S$. With the same notation for $j$ and $k$,
\begin{align*}
\beta_{ijk}^S&=2a_ia_ja_k-(a_i+\eta_i)(a_j+\eta_j)(a_k+\eta_k)-(a_i-\eta_i)(a_j-\eta_j)(a_k-\eta_k) \\
&=-2(\eta_i\eta_ja_k+\eta_j\eta_ka_i+\eta_k\eta_ia_j) \,.
\end{align*}
We will calculate the sum $\sum_S\beta_{ijk}^S$; note that
\[
\sum_S\big(2\steg(\boldsymbol{a})-\steg\left(M_S^+\boldsymbol{a}\right)-\steg\left(M_S^-\boldsymbol{a}\right)\big)=\sum_{\text{t.e.g.}}\left(\sum_S\beta_{ijk}^S\right)\,,
\]
where the outer sum is taken over all triples with even gaps. 

Write $X_{ij}=\{i,\ldots,j\}\setminus\{i,j\}$ (and similarly define $X_{jk}$, $X_{ki}$); recall that we chose our ordering of the variables such that $k\notin X_{ij}$. We distinguish three cases according to the number of $i,j,k$ that $S$ contains.

\noindent\textbf{Case 1} If $S$ contains at most one of $i,j,k$ then $\beta_{ijk}^S=0$.
 
\noindent\textbf{Case 2} If $S$ contains two of $i,j,k$, without loss of generality $i,j\in S$ but $k\notin S$, then $\beta_{ijk}^S=-2a_k$ if $\eta_i$ and $\eta_j$ have the same sign, and $\beta_{ijk}^S=2a_k$ if they have opposite signs. Write $S_1=S\cap X_{ij}$ and $S_2=S\cap (X_{jk}\cup X_{ki})$, so that $S={i,j}\cup S_1\cup S_2$. Now $|S_1|\equiv|S_2|$ mod 2; $\eta_i$ and $\eta_j$ have the same sign if and only if these cardinalities are both odd. A non-empty finite set has exactly half its subsets odd, so if both $X_{ij}$ and $X_{jk}\cup X_{ki}$ are non-empty there are $2^{m-5}$ choices of $S$ for which $\beta_{ijk}^S=-2a_k$ and $2^{m-5}$ for which $\beta_{ijk}^S=2a_k$. If one of the sets is empty then either $i,j$ are consecutive mod $m$ or $j,k,i$ are consecutive mod $m$; in either of these cases each of the $2^{m-4}$ possible choices of $S$ has $\beta_{ijk}^S=2a_k$.

\noindent\textbf{Case 3} If $i,j,k\in S$ then write $S_1=S\cap X_{ij}$, $S_2=S\cap X_{jk}$, $S_3=S\cap X_{ki}$. $|S_1|+|S_2|+|S_3|$ must be odd, so the possibilities are:
\begin{enumerate}
\item all three are odd, when $\beta_{ijk}^S=2(-a_i-a_j-a_k)$;
\item only $|S_1|$ is odd, when $\beta_{ijk}^S=2(a_i+a_j-a_k)$;
\item only $|S_2|$ is odd, when $\beta_{ijk}^S=2(-a_i+a_j+a_k)$; and
\item only $|S_3|$ is odd, when $\beta_{ijk}^S=2(a_i-a_j+a_k)$.
\end{enumerate}
If $X_{ij}$, $X_{jk}$ and $X_{ki}$ are all non-empty then each possibility occurs in $2^{m-6}$ ways. If only $X_{ij}$ is empty then (i) and (ii) are impossible; there are then exactly $2^{m-5}$ ways for each of (iii) and (iv) to occur. If $X_{ij}$ and $X_{jk}$ are empty then only (iv) is possible and it occurs in $2^{m-4}$ ways.

\noindent\textbf{End of cases}

Overall, then, a term which has no two variables consecutive contributes nothing. 
A term where exactly two variables, without loss of generality $a_i$ and $a_j$, are consecutive, contributes $2^{m-3}a_k$ from choices of $S$ which include $i$ and $j$ but not $k$, and another $2^{m-3}a_k$ from choices of $S$ which include all three, for a total of $2^{m-2}a_k$. For each $k$, there are $\frac{m-5}{2}$ triples with even gaps which have this property, so terms of this form contribute $(m-5)2^{m-3}\sum_ka_k$ in total.
 
A term where all three variables are consecutive, i.e. one of the form $a_{j-1}a_ja_{j+1}$, contributes $2^{m-3}(a_{j-1}-a_j+a_{j+1})$ from choices of $S$ which include $j-1$, $j$ and $j+1$; $2^{m-3}a_{j-1}$ from choices of $S$ which include $j$ and $j+1$ but not $j-1$; $2^{m-3}a_{j+1}$ from choices of $S$ which include $j-1$ and $j$ but not $j+1$; and $2^{m-3}a_{j}$ from choices of $S$ which include $j-1$ and $j+1$ but not $j$. In total, then, such a term contributes $2^{m-3}(2a_{j-1}+2a_{j+1})$; since there is one term of this form for each $j$, they contribute $4\times 2^{m-3}\sum_ka_k$ in total. Thus
\[
\sum_{\text{t.e.g.}}\left(\sum_S\beta_{ijk}^S\right)=(m-1)2^{m-3}\sum_ka_k\,,
\]
as required.
\end{proof}

\section{Relating Herman's algorithm to $\steg$}
Run Herman's algorithm with $N$ processors from some starting state $A^{(0)}$ (with an odd number of tokens), to get a sequence of states $(A^{(t)})_{t\geqslant 0}$. For each time step we incur a cost: if there are $m$ tokens at time $t$ the step from $t$ to $t+1$ has cost $\frac{m-1}{2}$. Also write $c_t$ for the cost accumulated by time $t$, so that $c_0=0$ and if $A^{(t)}$ has $m$ tokens then $c_{t+1}=c_t+\frac{m-1}{2}$. Note that $c_t$ is (with probability 1) ultimately constant, since it stops increasing once a stable state is reached. We shall use the results proved in Section 2 to give an exact value for the expected total cost.

With each state we associate a vector $\boldsymbol{a}^{(t)}$, whose components are the distances between consecutive tokens. We define this more precisely as follows: if $A^{(t)}$ has $m$ tokens then write $b^{(t)}_1<\cdots<b^{(t)}_m$ for their positions, then $\boldsymbol{a}^{(t)}$ is the vector of $m$ components with 
\[
a^{(t)}_i=
\begin{cases}b^{(t)}_{i+1}-b^{(t)}_i &\mbox{for } i<m \\
N+b^{(t)}_1-b^{(t)}_i &\mbox{for } i=m\;.
\end{cases}
\]
Note that $\sum_ia^{(t)}_k=N$, the total number of processors. Now define a sequence of variables $(X_t)_{t\geqslant 0}$ as $X_t=\steg(\boldsymbol{a}^{(t)})+\frac{1}{4}Nc_t$. 

\begin{theorem}The sequence $(X_t)_{t\geqslant 0}$ is a martingale, in the sense that
\[
\mathbb{E}(X_{t+1}\mid A^{(0)},\ldots,A^{(t)})=X_t\,.
\]
\end{theorem}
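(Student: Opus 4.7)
My plan is to show that, conditional on $A^{(t)}$, one step of Herman's algorithm realises exactly the uniform random move of Theorem \ref{moves} applied to $\boldsymbol{a}^{(t)}$, with annihilations absorbed into $\steg$ by Lemma \ref{reduction}.

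First I would encode a single step by indicators $\varepsilon_1,\ldots,\varepsilon_{m_t}\in\{0,1\}$, with $\varepsilon_i=1$ iff token $i$ passes, and write $m_t$ for the number of tokens at time $t$. A direct computation (ignoring annihilations) shows that each gap evolves as $a_i\mapsto a_i+\varepsilon_{i+1}-\varepsilon_i$, indices cyclic. Setting $S=\{i:\varepsilon_{i+1}\neq\varepsilon_i\}\subseteq[m_t]$, the cardinality $|S|$ is even, and the signs of the nonzero increments $\varepsilon_{i+1}-\varepsilon_i$ alternate around the cycle. For each even $S$ there are exactly two $\varepsilon$-sequences that produce it (a complementary pair), giving the two possible sign patterns, and these correspond respectively to $M_S^+$ and $M_S^-$ (including $\varepsilon\equiv 0$ and $\varepsilon\equiv 1$ when $S=\varnothing$). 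Since $\varepsilon$ is uniform on $\{0,1\}^{m_t}$, the induced action on $\boldsymbol{a}^{(t)}$ has precisely the distribution of the random move $\tilde{\boldsymbol{a}}$ in Theorem \ref{moves}.

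Second I would handle annihilations. A collision of tokens $i$ and $i+1$ occurs exactly when $a_i=1$, $\varepsilon_i=1$ and $\varepsilon_{i+1}=0$, and the corresponding steg move sends $a_i$ to $0$. Writing $\tilde{\boldsymbol{a}}$ for the output of the steg move and $\boldsymbol{a}^{(t+1)}$ for the true post-collision gap vector, I would verify directly from the position updates that $\boldsymbol{a}^{(t+1)}$ is exactly the vector obtained by deleting the zero entry of $\tilde{\boldsymbol{a}}$ and adding its two cyclic neighbours; Lemma \ref{reduction}, combined with the cyclic invariance of $\steg$, then gives $\steg(\boldsymbol{a}^{(t+1)})=\steg(\tilde{\boldsymbol{a}})$. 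Simultaneous collisions must involve disjoint pairs of tokens (since $\varepsilon_{i+1}$ cannot be both $0$ and $1$), so the induced zeros in $\tilde{\boldsymbol{a}}$ are separated by nonzero entries and Lemma \ref{reduction} can be applied once per collision.

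Combining the two steps with Theorem \ref{moves}, and using $\sum_k a_k^{(t)}=N$, gives
\[
\mathbb{E}\bigl(\steg(\boldsymbol{a}^{(t+1)})\bigm|A^{(0)},\ldots,A^{(t)}\bigr)=\steg(\boldsymbol{a}^{(t)})-\tfrac{N(m_t-1)}{8}\,,
\]
while the cost definition yields $\tfrac{N}{4}(c_{t+1}-c_t)=\tfrac{N(m_t-1)}{8}$. These contributions cancel, so $\mathbb{E}(X_{t+1}\mid A^{(0)},\ldots,A^{(t)})=X_t$; the stable state $m_t=1$ is trivial, since then $\steg\equiv 0$ and $c_t$ is constant thereafter. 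I expect the main obstacle to be the bookkeeping at collisions: checking that $\boldsymbol{a}^{(t+1)}$ is really what Lemma \ref{reduction} produces from $\tilde{\boldsymbol{a}}$, and verifying that several simultaneous collisions can be reduced independently. The bijection between $\{0,1\}^{m_t}$ and the $2^{m_t}$ moves $M_S^\pm$ is clean once one notices the complementary-pair structure, and after that Theorem \ref{moves} does all the work.
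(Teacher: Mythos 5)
Your proposal is correct and follows essentially the same route as the paper: the $\varepsilon$-encoding with $S=\{i:\varepsilon_{i+1}\neq\varepsilon_i\}$ and the complementary-pair bijection is exactly the paper's decomposition into $S^+$ and $S^-$, the collision bookkeeping (zeros in $\tilde{\boldsymbol{a}}$ are never adjacent, merge the two neighbouring gaps, apply Lemma \ref{reduction} once per collision) matches the paper's argument, and the final cancellation against the cost increment via Theorem \ref{moves} is identical. No gaps.
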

\begin{proof}
Herman's algorithm is a Markov chain so
\[
\mathbb{E}(X_{t+1}\mid A^{(0)},\ldots,A^{(t)})=\mathbb{E}(X_{t+1}\mid A^{(t)})\,.
\]
Fix $A^{(t)}$ and suppose that it has $m$ tokens at positions $b^{(t)}_1<\cdots<b^{(t)}_m$. Write $\hat{b}^{(t+1)}_1,\ldots,\hat{b}^{(t+1)}_m$ for the positions of the corresponding tokens at time $t+1$ before any annihilations occur, and define gaps $\hat{a}^{(t+1)}_1,\ldots,\hat{a}^{(t+1)}_m$ as above.
While $\boldsymbol{a}^{(t+1)}$ is not in general equal to $\hat{\boldsymbol{a}}^{(t+1)}$, not only because some of the terms may be 0 but also because the latter will be cyclically shifted if a token has moved from position $N$ to position $1$, we claim that $\steg(\boldsymbol{a}^{(t+1)})=\steg(\hat{\boldsymbol{a}}^{(t+1)})$. 
Cyclic shifts do not change $\steg$, so only the collisions need concern us. Note that it is not possible for $\hat{a}^{(t+1)}_i$ and $\hat{a}^{(t+1)}_{i+1}$ to both be 0, as the former is only possible if the token at $b^{(t)}_i+1$ did not move and the latter if it did. 
If there is a single collision, say $\hat{a}^{(t+1)}_i=0$, then $a^{(t+1)}_j=\hat{a}^{(t+1)}_j$ for $j<i-1$, $a^{(t+1)}_{i-1}=\hat{a}^{(t+1)}_{i-1}+\hat{a}^{(t+1)}_{i+1}$, and $a^{(t+1)}_j=\hat{a}^{(t+1)}_{j+2}$ for $j>i+1$. By Lemma \ref{reduction}, $\steg(\boldsymbol{a}^{(t+1)})=\steg(\hat{\boldsymbol{a}}^{(t+1)})$. If there are multiple collisions we may carry out each one in turn, and $\steg$ will be preserved at each step.

%so that $\hat{b}^{(t+1)}_i=b^{(t+1)}_i+\zeta_i$, where $\zeta_i=1$ if that token has moved
Write $S^+$ for the set of $i$ such that the token at $b^{(t)}_i$ does not move but the token at $b^{(t)}_{i+1}$ does, and $S^-$ for the set of $i$ such that the token at $b^{(t)}_i$ moves but the token at $b^{(t)}_{i+1}$ does not. Writing $S=S^+\cup S^-$, the elements of $S$ must alternate between the two types and so $|S^+|=|S^-|$. Now
\[
\hat{a}^{(t+1)}_i=
\begin{cases}a^{(t)}_i+1 &\mbox{if } i\in S^+ \\
a^{(t)}_i-1 &\mbox{if } i\in S^- \\
a^{(t)}_i &\mbox{otherwise,}
\end{cases}
\]  
so, if $S$ is non-empty, $\hat{\boldsymbol{a}}^{(t+1)}=M_S^+(\boldsymbol{a}^{(t)})$ if the smallest member of $S$ is in $S^+$, and $\hat{\boldsymbol{a}}^{(t+1)}=M_S^-(\boldsymbol{a}^{(t)})$ otherwise. For each non-empty even set $S$ there are two possibilities for $(S^+,S^-)$, and each of these uniquely determines which of the tokens moves, so has probability $2^{-m}$. If $S=\varnothing$ then either all tokens move or no tokens move; we may think of these as corresponding to $M_\varnothing^+$ and $M_\varnothing^-$ respectively. So $\hat{\boldsymbol{a}}^{(t+1)}$ is obtained from $\boldsymbol{a}^{(t)}$ by applying one of the $2^m$ moves, and each is equally likely. Applying Theorem \ref{moves}, 
\begin{align*}
\mathbb{E}(X_{t+1}\mid A^{(t)})&=\mathbb{E}(\steg(\boldsymbol{a}^{(t+1)})\mid A^{(t)})+\frac{N}{4}\mathbb{E}(c_{t+1}\mid A^{(t)}) \\
&=\mathbb{E}(\steg(\hat{\boldsymbol{a}}^{(t+1)})\mid \boldsymbol{a}^{(t)})+\frac{N}{4}\left(\frac{m-1}{2}+c_t\right) \\
&=\steg(\boldsymbol{a}^{(t)})-\frac{m-1}{8}N+\frac{N}{4}\left(\frac{m-1}{2}+c_t\right) \\
&=X_t\,,
\end{align*}
as required.
\end{proof}
We now use the fact that $(X_t)$ is a martingale to deduce the exact value of the expected total cost.

\begin{theorem}
The expected total cost starting from the state $A^{(0)}$ is $\frac{4}{N}\steg(\boldsymbol{a}^{(0)})$.
\end{theorem}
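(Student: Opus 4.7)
The plan is to leverage the martingale property of $(X_t)$ just established, evaluate its expectation at time $0$ and at $t\to\infty$, and equate the two.

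First I would observe that since $c_0=0$ and $\boldsymbol{a}^{(0)}$ is deterministic, $\mathbb{E}(X_0)=\steg(\boldsymbol{a}^{(0)})$. The martingale property then gives $\mathbb{E}(X_t)=\steg(\boldsymbol{a}^{(0)})$ for every $t\geqslant 0$, which rearranges to
\[
\mathbb{E}(c_t)=\frac{4}{N}\bigl(\steg(\boldsymbol{a}^{(0)})-\mathbb{E}(\steg(\boldsymbol{a}^{(t)}))\bigr).
\]
It then remains to pass to the limit $t\to\infty$ on both sides.

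For the left-hand side, $c_t$ is non-decreasing in $t$ and tends almost surely to the total cost $C$ (which is well defined, possibly a priori infinite). Monotone convergence therefore gives $\mathbb{E}(c_t)\to\mathbb{E}(C)$. For the right-hand side I would argue that $\steg(\boldsymbol{a}^{(t)})\to 0$ almost surely: since the algorithm stabilises almost surely (as recalled in the introduction), with probability one there is some $T$ after which $\boldsymbol{a}^{(t)}$ has only one component, and by definition $\steg$ of a single variable is $0$. To exchange limit and expectation here, I would use the uniform bound that Theorem~\ref{inequality} supplies: applying it to $x_i=a^{(t)}_i/N$, which sum to $1$, gives $\steg(\boldsymbol{a}^{(t)})\leqslant N^3/24$ deterministically, so bounded convergence yields $\mathbb{E}(\steg(\boldsymbol{a}^{(t)}))\to 0$. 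Combining the two limits gives $\mathbb{E}(C)=\frac{4}{N}\steg(\boldsymbol{a}^{(0)})$, as required.

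The only genuine obstacle is the convergence step, and it is mild: one simply has to point out the deterministic upper bound on $\steg$ coming from Theorem~\ref{inequality} (together with the scaling $\sum a^{(t)}_i=N$) to justify the interchange, and invoke almost sure stabilisation to identify the almost sure limit of $\steg(\boldsymbol{a}^{(t)})$ as $0$. Everything else is bookkeeping from the martingale identity.
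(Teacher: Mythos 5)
Your proof is correct, but it reaches the conclusion by a genuinely different route from the paper. The paper stops the martingale at the stabilisation time $T$ and invokes the Optional Stopping Theorem, which it justifies by checking two conditions: $\mathbb{E}(T)<\infty$ (observed in the introduction) and a uniform bound $|X_{t+1}-X_t|<\frac{N^3}{24}+\frac{N^2}{8}$ on the increments, the latter coming from the same scaling bound $\steg(\boldsymbol{a}^{(t)})\leqslant N^3/24$ that you use. You instead apply the martingale identity only at deterministic times, obtaining $\mathbb{E}(c_t)=\frac{4}{N}\bigl(\steg(\boldsymbol{a}^{(0)})-\mathbb{E}(\steg(\boldsymbol{a}^{(t)}))\bigr)$ for each $t$, and then pass to the limit by monotone convergence on the cost term and bounded convergence on the $\steg$ term. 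Both arguments hinge on the same two facts --- the martingale property and the deterministic bound $0\leqslant\steg(\boldsymbol{a}^{(t)})\leqslant N^3/24$ --- but your version is slightly more economical in its hypotheses: it needs only almost sure stabilisation (to identify the a.s.\ limit of $\steg(\boldsymbol{a}^{(t)})$ as $0$), not $\mathbb{E}(T)<\infty$, and it recovers finiteness of the expected total cost as a byproduct since the right-hand side converges to a finite limit. The paper's version is shorter once the Optional Stopping Theorem is taken off the shelf, and evaluates the martingale directly at the natural stopping time; yours is self-contained at the level of the basic convergence theorems. Either is a complete proof.
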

\begin{proof}
Let $T$ be the earliest time for which $A^{(T)}$ is stable. $T$ is a stopping time, and, as we observed in Section 1, it has finite expectation. Further, if $a_1,\ldots,a_m$ are non-negative with sum $N$ then 
\begin{align*}
\steg(a_1,\ldots,a_m)&=N^3\steg(a_1/N,\ldots,a_m/N) \\
&\leqslant N^3\frac{1}{24}\left(1-\frac{1}{m^2}\right) \\
&<\frac{N^3}{24} \,,
\end{align*}
by Theorem \ref{inequality}, and $\steg(a_1,\ldots,a_m)\geqslant 0$, so
\begin{align*}
|X_{t+1}-X_t|&\leqslant|\steg(\boldsymbol{a}^{(t+1)})-\steg(\boldsymbol{a}^{(t)})|+\tfrac{1}{4}N(c_{t+1}-c_t) \\
&<\frac{N^3}{24}+\frac{N^2}{8} \,.
\end{align*}
Since the stopping time has finite expectation and there is a global bound on the difference between successive variables, the Optional Stopping Theorem (see, for example, \cite{Wil}, p. 100) applies and so $\mathbb{E}(X_T)=X_0$. Consequently
\[
\mathbb{E}(\steg(\boldsymbol{a}^{(T)}))+\frac{N}{4}\mathbb{E}(c_T)=\steg(\boldsymbol{a}^{(0)})\,.
\]
Since $A^{(T)}$ is stable, $\boldsymbol{a}^{(T)}$ has only one component, and so $\steg(\boldsymbol{a}^{(T)})=0$ (with probability 1); also $c_T$ is the total cost since no further cost is incurred after time $T$. So we have
\[
\mathbb{E}(c_T)=\frac{4}{N}\steg(\boldsymbol{a}^{(0)})\,,
\]
as required.
\end{proof}
Since $\steg(\boldsymbol{a}^{(0)})=N^3\steg(a^{(0)}_1/N,\ldots,a^{(0)}_m/N)$, applying Theorem \ref{inequality} gives the following result.
\begin{cor}\label{bound}The expected total cost from any starting state with $2s+1$ tokens is at most $\left(1-(2s+1)^{-2}\right)N^2/6$.
\end{cor}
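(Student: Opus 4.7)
The plan is to deduce this directly by combining the exact expected-cost identity from the previous theorem with the upper bound on $\steg$ from Theorem \ref{inequality}, using the fact that $\steg$ is homogeneous of degree~$3$.

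First I would recall that the preceding theorem gives $\mathbb{E}(c_T)=\frac{4}{N}\steg(\boldsymbol{a}^{(0)})$, so it suffices to bound $\steg(\boldsymbol{a}^{(0)})$ from above. Since every monomial appearing in $\steg$ is a product of exactly three of the variables, $\steg$ is homogeneous of degree~$3$, and hence
\[
\steg(\boldsymbol{a}^{(0)})=N^{3}\steg\!\left(\boldsymbol{a}^{(0)}/N\right),
\]
where the rescaled vector $\boldsymbol{a}^{(0)}/N$ has non-negative entries summing to~$1$ and has $m=2s+1$ components (one for each token).

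Next I would apply Theorem \ref{inequality} to this rescaled vector with $m=2s+1$, which is odd and at least~$1$; if $m=1$ the vector has a single entry and $\steg$ is identically zero, so the bound $\frac{1}{24}(1-1/m^2)=0$ still holds trivially, while for $m\geqslant 3$ Theorem \ref{inequality} applies directly and gives
\[
\steg\!\left(\boldsymbol{a}^{(0)}/N\right)\leqslant \frac{1}{24}\!\left(1-\frac{1}{(2s+1)^{2}}\right).
\]

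Finally I would assemble the pieces:
\[
\mathbb{E}(c_T)=\frac{4}{N}\steg(\boldsymbol{a}^{(0)})
=\frac{4N^{2}}{1}\cdot\steg\!\left(\boldsymbol{a}^{(0)}/N\right)
\leqslant \frac{N^{2}}{6}\!\left(1-\frac{1}{(2s+1)^{2}}\right),
\]
which is the claimed bound. There is no real obstacle here: the statement is essentially a direct substitution, and the only thing to be careful about is matching the number of tokens in the starting configuration with the parameter $m$ in Theorem \ref{inequality} and noting the degree-$3$ scaling explicitly.
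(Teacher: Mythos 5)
Your proof is correct and is exactly the paper's argument: combine the identity $\mathbb{E}(c_T)=\frac{4}{N}\steg(\boldsymbol{a}^{(0)})$ with the degree-$3$ homogeneity of $\steg$ and Theorem \ref{inequality}, and the arithmetic $\frac{4}{N}\cdot N^3\cdot\frac{1}{24}=\frac{N^2}{6}$ gives the bound. The explicit handling of the $m=1$ case is a small extra care point not spelled out in the paper, but nothing differs in substance.
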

We are now ready to prove our main result.
\begin{theorem}The expected time to stabilisation for Herman's algorithm from any start state on $N$ processors is less than $(\pi^2-8)N^2/12$.\end{theorem}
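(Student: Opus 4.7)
The plan is to refine Corollary \ref{bound} by decomposing $T$ according to the current number of tokens and applying the cost bound at every level. Let $\tau_k$ be the random amount of time the process spends with exactly $2k+1$ tokens; since the token count is non-increasing in Herman's algorithm, each $\tau_k$ is the duration of at most one sojourn. By the definition of cost, $T = \sum_{k \geq 1} \tau_k$ and $c_T = \sum_{k \geq 1} k\tau_k$.

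For each $j \geq 1$, let $T_j$ be the first time at which the number of tokens is at most $2j+1$. The quantity $\sum_{k=1}^j k\tau_k$ is exactly the cost accumulated from $T_j$ onward, so by the strong Markov property at $T_j$, Corollary \ref{bound} applied to $A^{(T_j)}$, and the fact that $C(j):=(1-(2j+1)^{-2})N^2/6$ is strictly increasing in $j$,
\[
Y_j := \mathbb{E}\Bigl(\sum_{k=1}^j k\tau_k\Bigr) \leq C(j).
\]
The sequence $(Y_j)$ is non-decreasing and non-negative, and $Y_s = \mathbb{E}(c_T)$ when the start state has $2s+1$ tokens. Abel summation (using $Y_0=0$) yields
\[
\mathbb{E}(T) = \sum_{k=1}^s \frac{Y_k - Y_{k-1}}{k} = \frac{Y_s}{s} + \sum_{k=1}^{s-1} \frac{Y_k}{k(k+1)} \leq \frac{C(s)}{s} + \sum_{k=1}^{s-1} \frac{C(k)}{k(k+1)}.
\]

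This right-hand side is strictly less than $\sum_{k=1}^\infty C(k)/(k(k+1))$: using $\sum_{k \geq s} 1/(k(k+1)) = 1/s$ together with the strict monotonicity $C(k) > C(s)$ for $k > s$, the tail $\sum_{k \geq s} C(k)/(k(k+1))$ strictly exceeds $C(s)/s$. A routine partial-fraction calculation gives
\[
\frac{1}{k(k+1)(2k+1)^2} = \frac{1}{k} - \frac{1}{k+1} - \frac{4}{(2k+1)^2},
\]
and combining this with $\sum_k 1/(k(k+1)) = 1$ and $\sum_{k \geq 1} 1/(2k+1)^2 = \pi^2/8 - 1$ evaluates the infinite sum to $(\pi^2 - 8)N^2/12$. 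The main conceptual step is recognising that Corollary \ref{bound}, applied via the strong Markov property at each hitting time $T_j$, controls not merely the total expected cost but the expected cost accumulated below every threshold; these nested bounds can then be converted into a bound on expected time by the Abel summation argument above, and what would otherwise be the crude estimate $\mathbb{E}(T) \leq \mathbb{E}(c_T) \leq N^2/6$ tightens to $(\pi^2-8)N^2/12$.
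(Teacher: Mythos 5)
Your proof is correct and follows essentially the same route as the paper: decompose time by token count, observe that the cost accumulated after the hitting time $T_j$ is $\sum_{k\le j}k\tau_k$, bound each such quantity by Corollary \ref{bound} via conditioning on $A^{(T_j)}$, and convert to a time bound by Abel summation. The only differences are cosmetic (you bound the finite Abel sum by the infinite series using monotonicity of $C(k)$ rather than passing to the limit, and evaluate the series by partial fractions rather than direct cancellation), and your argument has the small merit of making the strictness of the final inequality explicit.
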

\begin{proof}
Fix a start state $A$ with $2r+1$ tokens, and let the random variable $T$ be the total time to stabilisation. For each $s\geqslant 1$, write $A_s$ for the first configuration with at most $2s+1$ tokens and $C_s$ for the cost accumulated after that point. Note that $A_s=A_r=A$ and $C_s=C_r$ for every $s\geqslant r$. Now $C_0=0$ and $C_s-C_{s-1}=0$ for all $s\geqslant r$, so for any $t\geqslant r$,
\begin{align*}
T&=\sum_{s=1}^r\tfrac{1}{s}\left(C_s-C_{s-1}\right) \\
&=\sum_{s=1}^t\tfrac{1}{s}\left(C_s-C_{s-1}\right) \\
&=\sum_{s=1}^t\left(\tfrac{1}{s}-\tfrac{1}{s+1}\right)\!C_s+\tfrac{1}{t+1}C_t \\
&=\sum_{s=1}^t\tfrac{1}{s(s+1)}C_s+\tfrac{1}{t+1}C_t,.
\end{align*}
Since
\[
\lim_{t\to\infty}\tfrac{1}{t+1}C_t=0\,,
\]
it follows that
\[
\sum_{s=1}^\infty\tfrac{1}{s(s+1)}C_s=T\,.
\]
Also,
\begin{align*}
\mathbb{E}(C_s)&=\mathbb{E}(\mathbb{E}(C_s\mid A_s)) \\
&\leqslant \left(1-\frac{1}{(2s+1)^2}\right)\!\frac{N^2}{6}\,,
\end{align*}
by Corollary \ref{bound}, and so
\begin{align*}
\mathbb{E}(T)&\leqslant\sum_{s=1}^\infty\frac{1}{s(s+1)}\left(1-\frac{1}{(2s+1)^2}\right)\!\frac{N^2}{6} \\
&=\sum_{s=1}^\infty\frac{1}{s(s+1)}\left(\frac{4s^2+4s}{(2s+1)^2}\right)\!\frac{N^2}{6} \\
&=\raisebox{-0.5ex}{$\dfrac{2N^2}{3}$}\sum_{s=1}^\infty\frac{1}{(2s+1)^2} \\
&=\frac{(\pi^2-8)}{12}N^2\,,
\end{align*}
as required.
\end{proof}

\section{Acknowledgements}
The author acknowledges support from the European Union
through funding under FP7--ICT--2011--8 project HIERATIC (316705).

\end{document}